  \providecommand*\input@path{}
  \newcommand\addinputpath[1]{
  \expandafter\def\expandafter\input@path
  \expandafter{\input@path{#1}}}
  \newcommand{\Rmnum}[1]{\expandafter\@slowromancap\romannumeral #1@}
  \newcommand{\thickhline}{%
    \noalign {\ifnum 0=`}\fi \hrule height .6pt
    \futurelet \reserved@a \@xhline
  }
  \def\email#1{\emailname:#1}
  \def\emailname{E-mail}%
  \def\acknowledgement{\par\addvspace{17pt}\small\rmfamily
  \trivlist\if!\ackname!\item[]\else
  \item[\hskip\labelsep
  {\bfseries\ackname}]\fi}
  \newenvironment{acknowledgements}{\begin{acknowledgement}}
  {\end{acknowledgement}}
  \def\trans@english{\switcht@albion}
  \def\switcht@albion{\def\ackname{Acknowledgements}%
  }\switcht@albion
  \renewcommand*{\bm}[1]{#1}%
\def\keywordname{{\bfseries Keywords}}%
\def\keywords#1{\par\addvspace\medskipamount{\rightskip=0pt plus1cm
\def\and{\ifhmode\unskip\nobreak\fi\ $\cdot$
}\keywordname\enspace\ignorespaces#1\par}}
\newcommand{\tl}{\textnormal}
\newcommand{\rank}{\textrm{Rank}}
\newcommand{\supp}{\textnormal{Supp}}
\newtheorem{theorem}{Theorem}
\newtheorem{corollary}{Corollary}
\newtheorem{lemma}[theorem]{Lemma}
\newtheorem{proposition}{Proposition}
\theoremstyle{remark}
\newtheorem{definition}[theorem]{Definition}
\newtheorem{remark}{Remark}
\date{}
\begin{document}

\title{Polynomial-Time Key Recovery Attack on the Lau-Tan Cryptosystem Based on Gabidulin Codes}

\author{
Wenshuo Guo\thanks{Wenshuo Guo is with the Chern Institute of Mathematics and LPMC, Nankai University, Tianjin 300071, China. \email{ws\_guo@mail.nankai.edu.cn}} 
\ and 
Fang-Wei Fu\thanks{Fang-Wei Fu is with the Chern Institute of Mathematics and LPMC, Nankai University, Tianjin 300071, China. \email{fwfu@nankai.edu.cn}}
}

\maketitle

\begin{abstract}
This paper presents a key recovery attack on the cryptosystem proposed by Lau and Tan in a talk at ACISP 2018. The Lau-Tan cryptosystem uses Gabidulin codes as the underlying decodable code. To hide the algebraic structure of Gabidulin codes, the authors chose a matrix of column rank $n$ to mix with a generator matrix of the secret Gabidulin code. The other part of the public key, however, reveals crucial information about the private key. Our analysis shows that the problem of recovering the private key can be reduced to solving a multivariate linear system over the base field, rather than solving a multivariate quadratic system as claimed by the authors. Solving the linear system for any nonzero solution permits us to recover the private key. Apparently, this attack costs polynomial time, and therefore completely breaks the cryptosystem.
\end{abstract}
\keywords{Post-quantum cryptography \and Code-based cryptography \and Gabidulin codes \and Key recovery attack}

\section{Introduction}
\noindent In post-quantum era, most public key cryptosystems based on number theoretic problems will suffer serious security threat. To resist quantum computer attacks, people have paid much attention to seeking alternatives in the future. Among these alternatives, code-based cryptography is one of the most promissing candidates. The security of these cryptosystems rely on the difficulty of decoding general linear codes. The first code-based cryptosystem was the one proposed by McEliece in 1978, which is now called the McEliece cryptosystem \cite{mceliece1978public}. Although this scheme still remains secure, it had never been used in practical situations due to the drawback of large key size. To tackle this problem, various improvements for McEliece's original scheme were proposed one after another. Generally these improvements can be divided into two categories: one is to substitute Goppa codes used in the McEliece system with other families of codes endowed with special structures, the other is to use codes endowed with the rank metric. 

In 1991, Gabidulin et al. proposed an encryption scheme based on rank metric codes, which is now known as the GPT cryptosystem\cite{gabidulin1991ideals}. An important advantage of rank-based cryptosystems lies in their compact representation of public keys. Some representative variants based on the rank metric Gabidulin codes can be found in \cite{gabidulin2003reducible,berger2004designing,faure2005new,loidreau2017new,lau2019new}. Unfortunately, most of these variants, including the original GPT cryptosystem, have been completely broken due to the inherent structural weakness of Gabidulin codes. Specifically, Gabidulin codes contain a large subspace invariant under the Frobenius transformation, which provides the feasibility for one to distinguish Gabidulin codes from general ones. Based on this observation, various structural attacks \cite{overbeck2008structural,horlemann2018extension,Otmani2018Improved,gaborit2018polynomial,coggia2020security} on the GPT cryptosystem and some of their variants were designed. 

In \cite{lau2018new}, Lau and Tan proposed a public key encryption scheme based on Gabidulin codes. In their cryptosystem, the published information consists of a generator matrix of the disturbed Gabidulin code by a random code that admits maximum rank weight $n$ and a random vector of column rank $n$. This technique of masking the structure of Gabidulin codes, as claimed by Lau and Tan, can prevent some existing attacks \cite{overbeck2008structural,horlemann2018extension,gaborit2018polynomial,Otmani2018Improved}. Additionally, the recent Coggia-Couvreur attack \cite{coggia2020security} and Ghatak's attack \cite{2020Extending} do not work on this cryptosystem either.

\textbf{Our contributions}. This paper mainly investigates the security of the Lau-Tan cryptosystem and presents a simple yet efficient key recovery attack on this encryption scheme. Additionally, our analysis shows that all the generating vectors of a Gabidulin code, together with the zero vector, form a $1$-dimensional linear space. In other words, for a fixed generating vector $\bm{g}$ of a Gabidulin code $\mathcal{G}\subseteq\mathbb{F}_{q^m}^n$, any other generating vector of $\mathcal{G}$ must be of the form $\gamma\bm{g}$ for some $\gamma\in\mathbb{F}_{q^m}^*$. This suggests that there are totally $q^m-1$ generating vectors for a Gabidulin code over $\mathbb{F}_{q^m}$. Meanwhile, we also introduce a different approach from the one proposed in \cite{horlemann2018extension} to compute the generating vector of Gabidulin codes when an arbitrary generator matrix is given.

The rest of this paper is organized as follows. Section \ref{section2} introduces some basic notations used throughout this paper, as well as the concept of Moore matrices and Gabidulin codes. Section \ref{section3} gives a simple description of the Lau-Tan cryptosystem. In Section \ref{section4}, we mainly describe the principle of our attack. To do this, we first introduce some further results about Gabidulin codes that will be helpful for explaining why this attack works. Following this, we present this attack in two steps. Additionally we also give a complexity analysis of this attack and some experimental results using Magma. In Section \ref{section5}, we will make a few concluding remarks.

\section{Preliminaries}\label{section2}
\noindent In this section, we first introduce some notations in finite field and coding theory used throughout this paper. After that, we will recall some basic concepts about Gabidulin codes and some related results.

\subsection{Notations and basic concepts}
\noindent For a prime power $q$, we denote by $\mathbb{F}_q$ the finite field with $q$ elements, and $\mathbb{F}_{q^m}$ an extension field of $\mathbb{F}_q$ of degree $m$. Note that $\mathbb{F}_{q^m}$ can be seen as a linear space over $\mathbb{F}_q$ of dimension $m$. A vector $\bm{a}\in\mathbb{F}_{q^m}^m$ is called a basis vector if components of $\bm{a}$ form a basis of $\mathbb{F}_{q^m}$ over $\mathbb{F}_q$. In particular, we call $\bm{a}$ a normal basis vector if $\bm{a}$ is of the form $(\alpha^{q^{m-1}},\alpha^{q^{m-2}},\ldots,\alpha)$ for some $\alpha\in\mathbb{F}_{q^m}^*=\mathbb{F}_{q^m}\backslash\{0\}$. For two positive integers $k$ and $n$, denote by $\mathcal{M}_{k,n}(\mathbb{F}_q)$ the space of all $k\times n$ matrices over $\mathbb{F}_q$, and by $\tl{GL}_n(\mathbb{F}_q)$ the set of all invertible matrices in $\mathcal{M}_{n,n}(\mathbb{F}_q)$. For a matrix $M\in\mathcal{M}_{k,n}(\mathbb{F}_q)$, denote by $\langle M\rangle_q$ the linear space spanned by rows of $M$ over $\mathbb{F}_q$.

An $[n,k]$ linear code $\mathcal{C}$ over $\mathbb{F}_{q^m}$ is a $k$-dimensional subspace of $\mathbb{F}_{q^m}^n$. The dual code of $\mathcal{C}$, denoted by $\mathcal{C}^\perp$, is the orthogonal space of $\mathcal{C}$ under the usual inner product over $\mathbb{F}_{q^m}^n$. A $k\times n$ matrix $G$ is called a generator matrix of $\mathcal{C}$ if its row vectors form a basis of $\mathcal{C}$ over $\mathbb{F}_{q^m}$. A generator matrix $H$ of $\mathcal{C}^\perp$ is called a parity-check matrix of $\mathcal{C}$. For a codeword $\bm{c}\in\mathcal{C}$, the rank support of $\bm{c}$, denoted by $\supp(\bm{c})$, is the linear space spanned by components of $\bm{c}$ over $\mathbb{F}_q$. The rank weight of $\bm{c}$ with respect to $\mathbb{F}_q$, denoted by $\tl{rk}(\bm{c})$, is defined to be the dimension of $\supp(\bm{c})$ over $\mathbb{F}_q$. The minimum rank distance of $\mathcal{C}$, denoted by $\tl{rk}(\mathcal{C})$, is defined to be the minimum rank weight of all nonzero codewords in $\mathcal{C}$. For a matrix $M\in\mathcal{M}_{k,n}(\mathbb{F}_{q^m})$, the rank support of $M$, denoted by $\supp(M)$, is defined to be the linear space spanned by entries of $M$ over $\mathbb{F}_q$. Similarly, the rank weight of $M$ with respect to $\mathbb{F}_q$, denoted by $\tl{rk}(M)$, is defined as the dimension of $\supp(M)$ over $\mathbb{F}_q$.

\subsection{Gabidulin codes}
\noindent In this section, we will recall the concept of Gabidulin codes. Before doing this, we first introduce of the definition of Moore matrices and some related results.
\begin{definition}[Moore matrices]
For an integer $i$ and $\alpha\in\mathbb{F}_{q^m}$, we define $\alpha^{[i]}=\alpha^{q^i}$ to be the $i$-th Frobenius power of $\alpha$. For a vector $\bm{a}=(\alpha_1,\alpha_2,\ldots,\alpha_n)\in\mathbb{F}_{q^m}^n$, we define $\bm{a}^{[i]}=(\alpha_1^{[i]},\alpha_2^{[i]},\ldots,\alpha_n^{[i]})$ to be the $i$-th Frobenius power of $\bm{a}$. For positive integers $k\leqslant n$, a $k\times n$ Moore matrix induced by $\bm{a}$ is defined as
\[\tl{Mr}_k(\bm{a})=
\begin{pmatrix}
\alpha_1&\alpha_2&\cdots&\alpha_n\\
\alpha_1^{[1]}&\alpha_2^{[1]}&\cdots&\alpha_n^{[1]}\\
\vdots&\vdots&&\vdots\\
\alpha_1^{[k-1]}&\alpha_2^{[k-1]}&\cdots&\alpha_n^{[k-1]}
\end{pmatrix}.
\]
\end{definition}

For a positive integer $l$ and a matrix $M=(M_{ij})\in\mathcal{M}_{k,n}(\mathbb{F}_{q^m})$, we denote by $M^{[l]}=(M_{ij}^{[l]})$ the $l$-th Frobenius power of $M$. For a set $\mathcal{V}\subseteq\mathbb{F}_{q^m}^n$, we denote by $\mathcal{V}^{[l]}=\{\bm{v}^{[l]}:\bm{v}\in \mathcal{V}\}$ the $l$-th Frobenius power of $\mathcal{V}$. Particularly, for a linear code $\mathcal{C}\subseteq\mathbb{F}_{q^m}^n$, it is easy to verify that $\mathcal{C}^{[l]}$ is also a linear code over $\mathbb{F}_{q^m}$.

The following proposition presents simple properties about Moore matrices.
\begin{proposition}\label{proposition5}
\begin{itemize}
\item[\tl{(1)}]For two $k\times n$ Moore matrices $A,B\in\mathcal{M}_{k,n}(\mathbb{F}_{q^m})$, the sum $A+B$ is also a $k\times n$ Moore matrix.
\item[\tl{(2)}]For a Moore matrix $M\in\mathcal{M}_{k,n}(\mathbb{F}_{q^m})$ and a matrix $Q\in \mathcal{M}_{n,l}(\mathbb{F}_q)$, the product $MQ$ forms a $k\times l$ Moore matrix.
\item[\tl{(3)}]For a vector $\bm{a}\in\mathbb{F}_{q^m}^n$ with $\tl{rk}(\bm{a})=l$, there exist $\bm{a}'\in\mathbb{F}_{q^m}^l$ with $\tl{rk}(\bm{a}')=l$ and $Q\in \tl{GL}_n(\mathbb{F}_q)$ such that $\bm{a}=(\bm{a}',\bm{0})Q$. Furthermore, let $A=\tl{Mr}_k(\bm{a})$ and $A'=\tl{Mr}_k(\bm{a}')$, then $A=[A'|0]Q$.
\item[\tl{(4)}]For positive integers $k\leqslant n\leqslant m$, let $\bm{a}\in\mathbb{F}_{q^m}^n$ be a vector such that $\tl{rk}(\bm{a})=n$, then the Moore matrix $\tl{Mr}_k(\bm{a})$ has rank $k$.
\end{itemize}
\end{proposition}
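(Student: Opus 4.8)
The plan is to dispatch the four parts in order, relying throughout on the single structural fact that the Frobenius map $x\mapsto x^{[1]}=x^q$ is $\mathbb{F}_q$-linear on $\mathbb{F}_{q^m}$: it is additive and it fixes every element of $\mathbb{F}_q$, so that $c^{[j]}=c$ for all $c\in\mathbb{F}_q$ and every $j$; iterating, each map $x\mapsto x^{[j]}$ is $\mathbb{F}_q$-linear as well. For (1) I would set $A=\tl{Mr}_k(\bm{a})$, $B=\tl{Mr}_k(\bm{b})$ and note that the $(j{+}1)$-th row of $A+B$ is $\bm{a}^{[j]}+\bm{b}^{[j]}=(\bm{a}+\bm{b})^{[j]}$ by additivity, whence $A+B=\tl{Mr}_k(\bm{a}+\bm{b})$. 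For (2), writing $M=\tl{Mr}_k(\bm{a})$, the $(j{+}1)$-th row of $MQ$ is $\bm{a}^{[j]}Q$; since the entries of $Q$ lie in $\mathbb{F}_q$ and are fixed by Frobenius, a componentwise check gives $(\bm{a}Q)^{[j]}=\bm{a}^{[j]}Q^{[j]}=\bm{a}^{[j]}Q$, so the rows of $MQ$ are exactly $(\bm{a}Q)^{[0]},\ldots,(\bm{a}Q)^{[k-1]}$, i.e.\ $MQ=\tl{Mr}_k(\bm{a}Q)$ is a Moore matrix.

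For (3), since $\tl{rk}(\bm{a})=l$ the components of $\bm{a}$ span an $l$-dimensional $\mathbb{F}_q$-subspace of $\mathbb{F}_{q^m}$; after a permutation of coordinates I may assume the first $l$ components, collected into $\bm{a}'$, are $\mathbb{F}_q$-linearly independent and that each of the remaining $n-l$ components is an $\mathbb{F}_q$-combination of them. Recording those combinations in an $l\times(n-l)$ matrix $D$ over $\mathbb{F}_q$ and folding the permutation back in, I can exhibit an invertible $Q=\left(\begin{smallmatrix}I_l&D\\0&I_{n-l}\end{smallmatrix}\right)$ (times the permutation) in $\tl{GL}_n(\mathbb{F}_q)$ with $(\bm{a}',\bm{0})Q=\bm{a}$. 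The ``furthermore'' clause is then immediate from (2): since $\tl{Mr}_k\big((\bm{a}',\bm{0})\big)=[\,\tl{Mr}_k(\bm{a}')\mid 0\,]=[\,A'\mid 0\,]$ (the appended zero columns remain zero under every Frobenius power), part (2) gives $A=\tl{Mr}_k\big((\bm{a}',\bm{0})Q\big)=[\,A'\mid 0\,]Q$.

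The real work is (4), which I would attack via linearized polynomials. Suppose the rows $\bm{a}^{[0]},\ldots,\bm{a}^{[k-1]}$ of $\tl{Mr}_k(\bm{a})$ were $\mathbb{F}_{q^m}$-linearly dependent, say $\sum_{j=0}^{k-1}\lambda_j\bm{a}^{[j]}=\bm{0}$ with the $\lambda_j$ not all zero. Reading this coordinate by coordinate yields $L(\alpha_i)=0$ for every $i$, where $L(x)=\sum_{j=0}^{k-1}\lambda_j x^{q^j}$ is a nonzero $q$-linearized polynomial of $q$-degree at most $k-1$. Such an $L$ acts $\mathbb{F}_q$-linearly on $\mathbb{F}_{q^m}$ with kernel of $\mathbb{F}_q$-dimension at most $k-1$, hence has at most $q^{k-1}$ roots. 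But $\tl{rk}(\bm{a})=n$ means $\alpha_1,\ldots,\alpha_n$ are $\mathbb{F}_q$-linearly independent, so every one of the $q^n$ elements of their $\mathbb{F}_q$-span is a root of $L$; since $n\geqslant k$ this forces $q^n\geqslant q^k>q^{k-1}$ roots, a contradiction. Therefore the $k$ rows are independent and $\tl{Mr}_k(\bm{a})$ has rank $k$. The one technical point I expect to be the crux is the bound on the number of roots of a nonzero linearized polynomial (equivalently, the nonvanishing of the Moore determinant of an $\mathbb{F}_q$-independent tuple), and it is here that the standing hypothesis $n\leqslant m$ matters, since it is what makes a vector of rank weight $n$ available inside $\mathbb{F}_{q^m}$ in the first place.
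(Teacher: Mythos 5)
Your proposal is correct and takes essentially the same route as the paper: the paper dismisses (1)--(3) as trivial (your arguments for them are exactly the standard ones), and for (4) it gives precisely your argument --- a nontrivial $\mathbb{F}_{q^m}$-dependence among the rows produces a nonzero linearized polynomial $f(x)=\sum_{j=0}^{k-1}\lambda_j x^{[j]}$ vanishing on the whole span $\langle\alpha_1,\ldots,\alpha_n\rangle_q$, whose $q^n$ elements exceed the $q^{k-1}$ root bound since $n\geqslant k$. The only cosmetic difference is that you phrase the root bound via the kernel dimension of the $\mathbb{F}_q$-linear map induced by $L$, while the paper cites the degree bound on $f$ directly; both rest on the same standard fact about linearized polynomials.
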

\begin{proof}
Statements (1), (2) and (3) are trivial and therefore the proof is omitted here.
\begin{itemize}
\item[(4)]Let $\bm{a}=(\alpha_1,\cdots,\alpha_n)\in\mathbb{F}_{q^m}^n$. If $\rank(\tl{Mr}_k(\bm{a}))<k$, then there exists $\bm{\lambda}=(\lambda_0,\cdots,\lambda_{k-1})\in\mathbb{F}_{q^m}^k\backslash\{\bm{0}\}$ such that $\bm{\lambda}\tl{Mr}_k(\bm{a})=\bm{0}$. Let $f(x)=\sum_{j=0}^{k-1}\lambda_jx^{[j]}\in\mathbb{F}_{q^m}[x]$, then $f(\alpha_i)=0$ holds for any $1\leqslant i\leqslant n$. It follows that $f(\alpha)=0$ for any $\alpha\in\langle \alpha_1,\cdots,\alpha_n\rangle_q$, which conflicts with the fact that $f(x)=0$ admits at most $q^{k-1}$ roots.
\end{itemize}
\end{proof}

The following proposition states a fact that a Moore matrix can be decomposed as the product of a specific Moore matrix and a matrix over the base field. This fact was once exploited by Loidreau in \cite{loidreau2021analysis} to cryptanalyze an encryption scheme \cite{loidreau2017new} based on Gabidulin codes.
\begin{proposition}[Moore matrix decomposition]\label{proposition3}
Let $\bm{a}$ be a basis vector of $\mathbb{F}_{q^m}$ over $\mathbb{F}_q$. For positive integers $k\leqslant m$, let $M\in\mathcal{M}_{k,m}(\mathbb{F}_{q^m})$ be a Moore matrix generated by $\bm{a}$. Then for any $k\times n$ Moore matrix $M'\in\mathcal{M}_{k,n}(\mathbb{F}_{q^m})$, there exists $Q\in\mathcal{M}_{m,n}(\mathbb{F}_q)$ such that $M'=MQ$.
\end{proposition}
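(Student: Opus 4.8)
The plan is to reduce the decomposition of the full Moore matrices to a single decomposition of their generating vectors, and then propagate that decomposition up the Frobenius tower. Write $M'=\tl{Mr}_k(\bm{b})$ for its generating vector $\bm{b}=(\beta_1,\ldots,\beta_n)\in\mathbb{F}_{q^m}^n$, and write $\bm{a}=(\alpha_1,\ldots,\alpha_m)$, so that $M=\tl{Mr}_k(\bm{a})$; the first rows of $M'$ and $M$ are $\bm{b}$ and $\bm{a}$. First I would invoke the hypothesis that $\bm{a}$ is a basis vector: since $\alpha_1,\ldots,\alpha_m$ form a basis of $\mathbb{F}_{q^m}$ over $\mathbb{F}_q$, each entry $\beta_j$ can be written uniquely as $\beta_j=\sum_{i=1}^m q_{ij}\alpha_i$ with $q_{ij}\in\mathbb{F}_q$. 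Collecting the $q_{ij}$ into $Q=(q_{ij})\in\mathcal{M}_{m,n}(\mathbb{F}_q)$, this says precisely that $\bm{b}=\bm{a}Q$, i.e. the first row of $M'$ is the first row of $M$ times $Q$.

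The second step is to show that the \emph{same} matrix $Q$ relates every pair of corresponding rows, not just the first. The $(l+1)$-th rows of $M$ and $M'$ are $\bm{a}^{[l]}$ and $\bm{b}^{[l]}$. Applying the $l$-th Frobenius power to $\beta_j=\sum_{i=1}^m q_{ij}\alpha_i$ and using that $x\mapsto x^{[l]}=x^{q^l}$ is additive on $\mathbb{F}_{q^m}$, I obtain $\beta_j^{[l]}=\sum_{i=1}^m q_{ij}^{[l]}\alpha_i^{[l]}$. The crucial point is that $q_{ij}\in\mathbb{F}_q$, so $q_{ij}^{[l]}=q_{ij}$; hence $\beta_j^{[l]}=\sum_{i=1}^m q_{ij}\alpha_i^{[l]}$, which in matrix form reads $\bm{b}^{[l]}=\bm{a}^{[l]}Q$. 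Since this holds for every $l\in\{0,1,\ldots,k-1\}$, stacking the rows yields $M'=MQ$, as desired.

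The only genuinely substantive point — and thus the place I would be most careful — is this invariance of $Q$ under Frobenius, which rests entirely on $Q$ having entries in the base field $\mathbb{F}_q$. This is what lets a single $\mathbb{F}_q$-linear relation between the generating vectors lift uniformly to all $k$ rows. Had $\bm{b}$ been expressed in terms of $\bm{a}$ with coefficients merely in $\mathbb{F}_{q^m}$ (which is automatic without the basis hypothesis), the Frobenius would act nontrivially on those coefficients and the row-by-row relations would no longer share a common matrix, destroying the Moore structure. The basis hypothesis on $\bm{a}$ is therefore used precisely to guarantee that these coefficients lie in $\mathbb{F}_q$; everything else is the routine bookkeeping of the additivity of Frobenius. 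No rank or dimension estimate is needed, so I would not expect to invoke Proposition \ref{proposition5}, although part (4) there confirms that $M$ and $M'$ both have full row rank $k$, consistent with the relation $M'=MQ$.
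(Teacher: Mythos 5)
Your proof is correct. Note that the paper itself states Proposition~\ref{proposition3} without proof (it is presented as a known fact, exploited by Loidreau in \cite{loidreau2021analysis}), so there is no authorial argument to compare against; your argument --- expand each entry of the generating vector $\bm{b}$ in the basis $\bm{a}$ with coefficients $q_{ij}\in\mathbb{F}_q$, then use that Frobenius fixes $\mathbb{F}_q$ pointwise to propagate the single relation $\bm{b}=\bm{a}Q$ to all $k$ rows --- is precisely the standard one, and it is the same mechanism underlying Proposition~\ref{proposition5}(2). One trivial remark: where you invoke additivity of $x\mapsto x^{[l]}$, you also implicitly use its multiplicativity to get $(q_{ij}\alpha_i)^{[l]}=q_{ij}^{[l]}\alpha_i^{[l]}$; since Frobenius is a field automorphism this is immediate, but it is worth stating.
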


Now we formally introduce the definition of Gabidulin codes.
\begin{definition}[Gabidulin codes]\label{gabidulin}
For positive integers $k\leqslant n\leqslant m$, let $\bm{a}\in\mathbb{F}_{q^m}^n$ such that $\tl{rk}(\bm{a})=n$. The $[n,k]$ Gabidulin code generated by $\bm{a}$, denoted by $\tl{Gab}_{n,k}(\bm{a})$, is defined as the linear space spanned by rows of $\tl{Mr}_k(\bm{a})$ over $\mathbb{F}_{q^m}$. $\tl{Mr}_k(\bm{a})$ is called a standard generator matrix of $\tl{Gab}_{n,k}(\bm{a})$, and $\bm{a}$ a generating vector respectively.
\end{definition}

\begin{remark}
Gabidulin codes can be seen as a rank metric counterpart of generalized Reed-Solomon (GRS) codes, both of which admit good algebraic properties. The dual of an $[n,k]$ Gabidulin code is an $[n,n-k]$ Gabidulin code \cite{gaborit2018polynomial}. An $[n,k]$ Gabidulin code has minimum rank distance $n-k+1$ \cite{horlemann2015new} and can therefore correct up to $\left\lfloor\frac{n-k}{2}\right\rfloor$ rank errors in theory. Efficient decoding algorithms for Gabidulin codes can be found in \cite{gabidulin1985theory,loidreau2005welch,richter2004error}.
\end{remark}

To reduce the public key size, Lau and Tan exploited a so-called partial circulant matrix in the cryptosystem, as defined in the following.
\begin{definition}[Partial circulant matrices]
For a vector $\bm{a}=(\alpha_1,\alpha_2,\ldots,\alpha_n)\in\mathbb{F}_{q^m}^n$, the circulant matrix induced by $\bm{a}$, denoted by $\tl{Cir}_n(\bm{a})$, is defined to be a matrix whose first row is $\bm{a}$ and $i$-th row is obtained by cyclically right shifting the $i-1$-th row for $2\leqslant i\leqslant n$. The $k\times n$ partial circulant matrix induced by $\bm{a}$, denoted by $\tl{Cir}_k(\bm{a})$, is defined to be the first $k$ rows of $\tl{Cir}_n(\bm{a})$.
\end{definition}

\begin{remark}
For a normal basis vector $\bm{a}$ of $\mathbb{F}_{q^m}$ over $\mathbb{F}_q$, it is easy to verify that the $k\times n$ partial circulant matrix induced by $\bm{a}$ is exactly the $k\times n$ Moore matrix generated by $\bm{a}$. In other words, mathematically we have $\tl{Cir}_k(\bm{a})=\tl{Mr}_k(\bm{a})$.
\end{remark}

\section{Lau-Tan cryptosystem}\label{section3}
\noindent In this section, we mainly give a simple description of the Lau-Tan cryptosystem that uses Gabidulin codes as the underlying decodable code. For a given security level, choose positive integers $m>n>k>k'\geqslant 1$ and $r$ such that $k'=\lfloor\frac{k}{2}\rfloor$ and $r\leqslant \lfloor\frac{n-k}{2}\rfloor$. The Lau-Tan cryptosystem consists of the following three algorithms.
\begin{itemize}
\item Key Generation
\item[]Let $\mathcal{G}$ be an $[n,k]$ Gabidulin code over $\mathbb{F}_{q^m}$, and $G\in\mathcal{M}_{k,n}(\mathbb{F}_{q^m})$ be a generator matrix of $\mathcal{G}$ of standard form. Randomly choose matrices $S\in \tl{GL}_k(\mathbb{F}_{q^m})$ and  $T\in \tl{GL}_n(\mathbb{F}_q)$. Randomly choose $\bm{u}\in\mathbb{F}_{q^m}^n$ such that $\tl{rk}(\bm{u})=n$ and set $U=\tl{Cir}_k(\bm{u})$. Let $G_{pub}=SG+UT$, then we publish $(G_{pub},\bm{u})$ as the public key, and keep $(S,G,T)$ as the private key.
\item Encryption
\item[]For a plaintext $\bm{m}\in\mathbb{F}_{q^m}^{k'}$, randomly choose $\bm{m}_s\in\mathbb{F}_{q^m}^{k-k'}$ such that $\tl{rk}((\bm{m},\bm{m}_s)U)>\lceil\frac{3}{4}(n-k)\rceil$. Randomly choose $\bm{e}_1,\bm{e}_2\in\mathbb{F}_{q^m}^n$ such that $\tl{rk}(\bm{e}_1)=r_1\leqslant \frac{r}{2}$ and $\tl{rk}(\bm{e}_2)=r_2\leqslant \frac{r}{2}$. Compute $\bm{c}_1=(\bm{m},\bm{m}_s)U+\bm{e}_1$ and $\bm{c}_2=(\bm{m},\bm{m}_s)G_{pub}+\bm{e}_2$. Then the ciphertext is $\bm{c}=(\bm{c}_1,\bm{c}_2)$.
\item Decryption
\item[]For a ciphertext $\bm{c}=(\bm{c}_1,\bm{c}_2)\in\mathbb{F}_{q^m}^{2n}$, compute $\bm{c}'=\bm{c}_2-\bm{c}_1T=(\bm{m},\bm{m}_s)SG+\bm{e}_2-\bm{e}_1T$. Note that $\tl{rk}(\bm{e}_2-\bm{e}_1T)\leqslant \tl{rk}(\bm{e}_2)+\tl{rk}(\bm{e}_1T)\leqslant r$, decoding $\bm{c}'$ with the existing decoder of $\mathcal{G}$ will lead to $\bm{m}'=(\bm{m},\bm{m}_s)S$, then by computing $\bm{m}'S^{-1}$ one can recover the plaintext $\bm{m}$.
\end{itemize}

\section{Key recovery attack}\label{section4}
\noindent In this section, we will describe a method of efficiently recovering an equivalent private key of the Lau-Tan cryptosystem. We point out that the privacy of $T$ is of great importance for the security of the whole cryptosystem. Specifically, if one can find the secret $T$, then one can recover everything he needs to decrypt an arbitrary ciphertext in polynomial time. Before describing this attack, we first introduce some further results about Gabidulin codes.

\subsection{Further results about Gabidulin codes}
\noindent Similar to GRS codes in the Hamming metric, Gabidulin codes have good algebraic structure. For instance, if $\mathcal{G}$ is a Gabidulin code over $\mathbb{F}_{q^m}$, then its $l$-th Frobenius power is still a Gabidulin code. Formally, we introduce the following lemma.
\begin{lemma}\label{lemma1}
Let $\mathcal{G}$ be an $[n,k]$ Gabidulin code over $\mathbb{F}_{q^m}$, with $G\in\mathcal{M}_{k,n}(\mathbb{F}_{q^m})$ as a generator matrix. For any positive integer $l$, $\mathcal{G}^{[l]}$ is also an $[n,k]$ Gabidulin code and has $G^{[l]}$ as a generator matrix.
\end{lemma}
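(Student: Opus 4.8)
The plan is to reduce the statement to a single algebraic identity, $\tl{Mr}_k(\bm{a})^{[l]}=\tl{Mr}_k(\bm{a}^{[l]})$, and to exploit that the $l$-th Frobenius map $x\mapsto x^{[l]}$ is a field automorphism of $\mathbb{F}_{q^m}$ fixing $\mathbb{F}_q$ pointwise. First I would record the semilinearity used throughout: since $x\mapsto x^{[l]}$ is additive and multiplicative, for any $\bm{\lambda}\in\mathbb{F}_{q^m}^k$ and any $G\in\mathcal{M}_{k,n}(\mathbb{F}_{q^m})$ one has $(\bm{\lambda}G)^{[l]}=\bm{\lambda}^{[l]}G^{[l]}$, the Frobenius being applied entrywise.

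With this in hand I would settle the generator-matrix claim for an arbitrary generator matrix $G$ of $\mathcal{G}$. A typical codeword $\bm{c}=\bm{\lambda}G$ maps to $\bm{c}^{[l]}=\bm{\lambda}^{[l]}G^{[l]}$, which lies in the $\mathbb{F}_{q^m}$-span of the rows of $G^{[l]}$; conversely, since Frobenius is a bijection on $\mathbb{F}_{q^m}$, every $\bm{\mu}G^{[l]}$ can be written as $(\bm{\lambda}G)^{[l]}$ by choosing $\bm{\lambda}=\bm{\mu}^{[-l]}$. Hence $\mathcal{G}^{[l]}$ equals the $\mathbb{F}_{q^m}$-span of the rows of $G^{[l]}$. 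That $\mathcal{G}^{[l]}$ again has dimension $k$ follows because $\bm{v}\mapsto\bm{v}^{[l]}$ is a Frobenius-semilinear bijection from $\mathcal{G}$ onto $\mathcal{G}^{[l]}$ and such maps preserve $\mathbb{F}_{q^m}$-dimension; in particular $G^{[l]}$ has full row rank and is therefore a generator matrix of $\mathcal{G}^{[l]}$.

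It remains to show $\mathcal{G}^{[l]}$ is itself a Gabidulin code. Here I would pass to the standard generator matrix guaranteed by Definition \ref{gabidulin}, writing $\mathcal{G}=\tl{Gab}_{n,k}(\bm{a})$ with $\tl{rk}(\bm{a})=n$, and compute directly. The $(i,j)$ entry of $\tl{Mr}_k(\bm{a})$ is $\alpha_j^{[i-1]}$, so the $(i,j)$ entry of $\tl{Mr}_k(\bm{a})^{[l]}$ is $\alpha_j^{[i-1+l]}=(\alpha_j^{[l]})^{[i-1]}$, which is exactly the $(i,j)$ entry of $\tl{Mr}_k(\bm{a}^{[l]})$; thus $\tl{Mr}_k(\bm{a})^{[l]}=\tl{Mr}_k(\bm{a}^{[l]})$. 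Finally I would check that $\bm{a}^{[l]}$ is an admissible generating vector, i.e.\ $\tl{rk}(\bm{a}^{[l]})=n$: because $x\mapsto x^{[l]}$ is an $\mathbb{F}_q$-linear bijection of $\mathbb{F}_{q^m}$, it carries the $\mathbb{F}_q$-span of the components of $\bm{a}$ isomorphically onto that of $\bm{a}^{[l]}$, preserving its $\mathbb{F}_q$-dimension. Combining the previous paragraph with this identity yields $\mathcal{G}^{[l]}=\tl{Gab}_{n,k}(\bm{a}^{[l]})$, an $[n,k]$ Gabidulin code.

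The computations are all routine; the one point demanding care is that Frobenius is only semilinear over $\mathbb{F}_{q^m}$, not linear, so I must consistently twist the scalars (as in $\bm{\lambda}=\bm{\mu}^{[-l]}$ and in the semilinear-bijection dimension count) rather than treat $x\mapsto x^{[l]}$ as an $\mathbb{F}_{q^m}$-linear isomorphism. This is the only place where a naive argument could silently fail.
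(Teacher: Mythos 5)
Your proof is correct: the paper itself offers no argument beyond declaring the lemma ``trivial from a straightforward verification,'' and your write-up is exactly that verification carried out in full, with the key points (semilinearity of $x\mapsto x^{[l]}$, the identity $\tl{Mr}_k(\bm{a})^{[l]}=\tl{Mr}_k(\bm{a}^{[l]})$, and preservation of $\tl{rk}(\bm{a})=n$) all handled properly. Your closing caution about Frobenius being only semilinear over $\mathbb{F}_{q^m}$ is well placed and is precisely the detail the paper's one-line proof leaves implicit.
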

\begin{proof}
Trivial from a straightforward verification.
\end{proof}

For a proper positive integer $l$, the intersection of a Gabidulin code and its $l$-th Frobenius power is still a Gabidulin code, as described in the following proposition.
\begin{proposition}\label{proposition2}
For an $[n,k]$ Gabidulin code $\mathcal{G}$ over $\mathbb{F}_{q^m}$, let $\bm{g}\in\mathbb{F}_{q^m}^n$ be a generating vector of $\mathcal{G}$. For a positive integer $l\leqslant \min\{k-1,n-k\}$, the intersection of $\mathcal{G}$ and its $l$-th Frobenius power is an $[n,k-l]$ Gabidulin code with $\bm{g}^{[l]}$ as a generating vector. In other words, we have the following equality
\[\mathcal{G}\cap\mathcal{G}^{[l]}=\tl{Gab}_{n,k-l}(\bm{g}^{[l]}).\]
\end{proposition}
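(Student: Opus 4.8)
The plan is to prove the asserted equality by combining one easy set-theoretic inclusion with a dimension count, so that the two sides, being nested with equal dimension, must coincide. First I would rewrite everything in terms of the standard Moore generators. Since $\bm{g}$ is a generating vector of $\mathcal{G}=\tl{Gab}_{n,k}(\bm{g})$, we have $\mathcal{G}=\langle\bm{g},\bm{g}^{[1]},\ldots,\bm{g}^{[k-1]}\rangle_{q^m}$, and by Lemma \ref{lemma1} its $l$-th Frobenius power satisfies $\mathcal{G}^{[l]}=\tl{Gab}_{n,k}(\bm{g}^{[l]})=\langle\bm{g}^{[l]},\bm{g}^{[l+1]},\ldots,\bm{g}^{[l+k-1]}\rangle_{q^m}$. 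In the same notation the candidate code on the right is $\tl{Gab}_{n,k-l}(\bm{g}^{[l]})=\langle\bm{g}^{[l]},\bm{g}^{[l+1]},\ldots,\bm{g}^{[k-1]}\rangle_{q^m}$, which uses exactly the Frobenius powers indexed by $l\leqslant j\leqslant k-1$.

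With these descriptions the inclusion $\tl{Gab}_{n,k-l}(\bm{g}^{[l]})\subseteq\mathcal{G}\cap\mathcal{G}^{[l]}$ is immediate: each generator $\bm{g}^{[j]}$ with $l\leqslant j\leqslant k-1$ appears among the generators of $\mathcal{G}$ (because $j\leqslant k-1$) and among those of $\mathcal{G}^{[l]}$ (because $l\leqslant j\leqslant l+k-1$). It then remains to show both sides have the same dimension. The right-hand code is an $[n,k-l]$ Gabidulin code, and since the Frobenius map preserves rank we have $\tl{rk}(\bm{g}^{[l]})=\tl{rk}(\bm{g})=n$; Proposition \ref{proposition5}(4), applicable as $k-l\leqslant n\leqslant m$, then gives $\dim\tl{Gab}_{n,k-l}(\bm{g}^{[l]})=k-l$. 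For the left-hand side I would use the dimension formula $\dim(\mathcal{G}\cap\mathcal{G}^{[l]})=2k-\dim(\mathcal{G}+\mathcal{G}^{[l]})$ and compute the sum: the union of the two generating families is $\{\bm{g}^{[j]}:0\leqslant j\leqslant k-1\}\cup\{\bm{g}^{[j]}:l\leqslant j\leqslant l+k-1\}=\{\bm{g}^{[j]}:0\leqslant j\leqslant l+k-1\}$, which is precisely the row set of the standard generator matrix $\tl{Mr}_{k+l}(\bm{g})$, so $\mathcal{G}+\mathcal{G}^{[l]}=\tl{Gab}_{n,k+l}(\bm{g})$.

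The crux, and the only substantial place where the hypotheses on $l$ are needed, is verifying that this sum has dimension exactly $k+l$ rather than something smaller. The condition $l\leqslant k-1$ guarantees that the two index ranges overlap, so their union is the contiguous block $\{0,1,\ldots,l+k-1\}$ of size $k+l$ and the stacked rows genuinely form the Moore matrix $\tl{Mr}_{k+l}(\bm{g})$. The condition $l\leqslant n-k$ guarantees $k+l\leqslant n$, which is exactly the hypothesis needed to invoke Proposition \ref{proposition5}(4) and conclude that $\tl{Mr}_{k+l}(\bm{g})$ has full row rank $k+l$; this rules out any hidden linear dependence among the $\bm{g}^{[j]}$ and pins down $\dim(\mathcal{G}+\mathcal{G}^{[l]})=k+l$. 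Substituting into the dimension formula yields $\dim(\mathcal{G}\cap\mathcal{G}^{[l]})=2k-(k+l)=k-l$, matching the dimension of the right-hand side. Since one code is contained in the other and they share the same dimension, the two are equal, which completes the argument.
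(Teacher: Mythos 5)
Your proof is correct and follows essentially the same route as the paper's: both rest on writing $\mathcal{G}$, $\mathcal{G}^{[l]}$, and $\tl{Gab}_{n,k-l}(\bm{g}^{[l]})$ as spans of consecutive Frobenius powers of $\bm{g}$ and on the linear independence of $\bm{g},\ldots,\bm{g}^{[k+l-1]}$, which holds because $l\leqslant n-k$ forces $k+l\leqslant n$ (Proposition \ref{proposition5}(4)). The only difference is in the finishing step: the paper concludes directly that the intersection of spans of sub-families of an independent family is the span of the common vectors, whereas you justify that same fact explicitly via the inclusion $\tl{Gab}_{n,k-l}(\bm{g}^{[l]})\subseteq\mathcal{G}\cap\mathcal{G}^{[l]}$ combined with the dimension count $\dim(\mathcal{G}\cap\mathcal{G}^{[l]})=2k-\dim(\mathcal{G}+\mathcal{G}^{[l]})=2k-(k+l)=k-l$.
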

\begin{proof}
By the definition of Gabidulin codes, $\mathcal{G}$ is a linear space spanned by $\bm{g},\ldots,\bm{g}^{[k-1]}$ over $\mathbb{F}_{q^m}$, i.e. $\mathcal{G}=\langle\bm{g},\ldots,\bm{g}^{[k-1]}\rangle_{q^m}$. By Lemma \ref{lemma1}, we have $\mathcal{G}^{[l]}=\langle\bm{g}^{[l]},\ldots,\bm{g}^{[k+l-1]}\rangle_{q^m}$. Note that $l\leqslant \min\{k-1,n-k\}$, then $k+l\leqslant n$ and $\bm{g},\ldots,\bm{g}^{[k+l-1]}$ are linearly independent over $\mathbb{F}_{q^m}$. It follows that $\mathcal{G}\cap\mathcal{G}^{[l]}=\langle\bm{g}^{[l]},\ldots,\bm{g}^{[k-l-1]}\rangle_{q^m}$ forms an $[n,k-l]$ Gabidulin code, having $\bm{g}^{[l]}$ as a generating vector. This completes the proof.
\end{proof}

\begin{lemma}\label{lemma2}
For positive integers $k<n$, let $\mathcal{G}\subset\mathbb{F}_{q^m}^n$ be an $[n,k]$ Gabidulin code, and $A\in\mathcal{M}_{k,n}(\mathbb{F}_{q^m})$ be a nonzero Moore matrix. If all the row vectors of $A$ are codewords in $\mathcal{G}$, then $A$ must be a generator matrix of $\mathcal{G}$.
\end{lemma}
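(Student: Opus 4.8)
The plan is to show that the $k$ rows of $A$ are linearly independent over $\mathbb{F}_{q^m}$; since they already lie in the $k$-dimensional space $\mathcal{G}$, linear independence forces them to be a basis of $\mathcal{G}$, i.e.\ $A$ to be a generator matrix. Write $A=\tl{Mr}_k(\bm{a})$ for its generating vector $\bm{a}$ (the first row), and set $d=\tl{rk}(\bm{a})$. The first reduction is the identity $\rank(\tl{Mr}_k(\bm{a}))=\min\{k,d\}$: using Proposition \ref{proposition5}(3) I would write $\bm{a}=(\bm{a}',\bm{0})Q$ with $\bm{a}'\in\mathbb{F}_{q^m}^d$ of full rank $d$ and $Q\in\tl{GL}_n(\mathbb{F}_q)$, so that $\tl{Mr}_k(\bm{a})=[\tl{Mr}_k(\bm{a}')\,|\,0]Q$ and hence $\rank\tl{Mr}_k(\bm{a})=\rank\tl{Mr}_k(\bm{a}')$; Proposition \ref{proposition5}(4) applied to $\bm{a}'$ (and to its top $\min\{k,d\}\times\min\{k,d\}$ Moore submatrix) then yields the claimed value. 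Thus the whole lemma reduces to proving $d\ge k$.

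I would argue by contradiction, assuming $d<k$, and exploit the extra hypothesis that not only $\bm{a}$ but all the Frobenius powers $\bm{a},\bm{a}^{[1]},\dots,\bm{a}^{[k-1]}$ lie in $\mathcal{G}$. Keeping the matrix $Q$ above, pass to the code $\mathcal{G}'=\{\bm{c}Q^{-1}:\bm{c}\in\mathcal{G}\}$, which by Proposition \ref{proposition5}(2) is again a Gabidulin code, namely $\tl{Gab}_{n,k}(\bm{g}')$ with $\bm{g}'=\bm{g}Q^{-1}$ of rank $n$. Since $Q$ has entries in $\mathbb{F}_q$ it commutes with the Frobenius, so each $\bm{a}^{[i]}Q^{-1}=(\bm{a}'^{[i]},\bm{0})$ is a codeword of $\mathcal{G}'$ vanishing on the last $n-d$ coordinates. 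These $k$ codewords span exactly a $d$-dimensional subspace of $\mathcal{G}'$, because their matrix is $[\tl{Mr}_k(\bm{a}')\,|\,0]$, of rank $\min\{k,d\}=d$.

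The core of the argument is a dimension count through the $q$-polynomial description of $\mathcal{G}'$. Evaluation $f\mapsto(f(g'_1),\dots,f(g'_n))$ is an $\mathbb{F}_{q^m}$-linear isomorphism from the $k$-dimensional space of $q$-polynomials $f=\sum_{j=0}^{k-1}c_jx^{[j]}$ onto $\mathcal{G}'$ (injectivity is exactly the root-counting used in Proposition \ref{proposition5}(4)). Under it, each codeword $(\bm{a}'^{[i]},\bm{0})$ corresponds to a $q$-polynomial vanishing on the fixed $(n-d)$-dimensional space $S=\langle g'_{d+1},\dots,g'_n\rangle_q$. The $q$-polynomials of $q$-degree $<k$ vanishing on $S$ form a space of dimension $k-\min\{k,n-d\}$, obtained by solving $\tl{Mr}_k(\bm{b})^{\!\top}\bm{c}=\bm{0}$ for a basis $\bm{b}$ of $S$ and reading off its rank via the same formula as above. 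Since this space is nonzero (it contains the image of $(\bm{a}',\bm{0})\ne\bm{0}$) and contains the $d$ linearly independent images of our codewords, I get $\min\{k,n-d\}=n-d$ and $d\le k-(n-d)$, i.e.\ $n\le k$, contradicting $k<n$. Therefore $d\ge k$, so $\rank(A)=k$ and $A$ generates $\mathcal{G}$.

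The step I expect to be the real obstacle is this last dimension count: everything hinges on converting the rank deficiency $d<k$ into the statement that the $d$ independent codewords $(\bm{a}'^{[i]},\bm{0})$ force $d$ independent low-degree $q$-polynomials to share the large root space $S$, and on pinning down $\dim\{f:\deg_q f<k,\ f|_S=0\}=k-(n-d)$ precisely. A tempting shortcut through Proposition \ref{proposition2} — iterating $\bm{a}^{[i]}\in\bigcap_{j=0}^{i}\mathcal{G}^{[j]}=\tl{Gab}_{n,k-i}(\bm{g}^{[i]})$ up to $\bm{a}^{[k-1]}\in\langle\bm{g}^{[k-1]}\rangle_{q^m}$, whence $\bm{a}=\gamma\bm{g}$ — only closes the argument when $k-1\le n-k$; the polynomial count above is what makes the proof uniform for all $k<n$.
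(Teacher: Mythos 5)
Your proof is correct, and its core argument is genuinely different from the paper's. Both proofs open identically: write $A=\tl{Mr}_k(\bm{a})$, use Proposition \ref{proposition5}(3) to get $A=[\tl{Mr}_k(\bm{a}')\,|\,0]Q$ with $\bm{a}'$ of full rank $d$ and $Q\in\tl{GL}_n(\mathbb{F}_q)$, and reduce the lemma to showing $d\geqslant k$. Where you diverge is the contradiction for the rank-deficient case, and here the paper is much shorter: if $d\leqslant k$, the first $d$ rows of $\tl{Mr}_k(\bm{a}')$ already span all of $\mathbb{F}_{q^m}^d$ (by Proposition \ref{proposition5}(4) applied to $\tl{Mr}_d(\bm{a}')$), so the row space of $A$ --- which lies inside $\mathcal{G}$ by linearity --- equals $(\mathbb{F}_{q^m}^d\times\{\bm{0}\})Q$ and therefore contains vectors of rank weight $1$, such as $(1,0,\ldots,0)Q$; this contradicts the MRD property $\tl{rk}(\mathcal{G})=n-k+1\geqslant 2$ quoted in the paper's remark on Gabidulin codes. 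You instead assume $d<k$ and run a dimension count on $q$-polynomials of $q$-degree less than $k$ vanishing on the $(n-d)$-dimensional space $S$: I checked the evaluation isomorphism onto $\mathcal{G}'$, the identification of the vanishing space with the left kernel of $\tl{Mr}_k(\bm{b})$ of dimension $k-\min\{k,n-d\}$, and the final chain forcing $n\leqslant k$, and all of these steps hold. What the paper's route buys is brevity, at the price of invoking the minimum-rank-distance fact from the literature; what yours buys is self-containedness --- it needs only the root-counting bound already used in Proposition \ref{proposition5}(4) --- together with, as you observe, uniformity in the parameters that the tempting shortcut through Proposition \ref{proposition2} lacks when $k-1>n-k$.
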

\begin{proof}
It suffices to prove $\rank(A)=k$. Suppose that $A$ is generated by $\bm{a}\in\mathbb{F}_{q^m}^n$, i.e. $A=\tl{Mr}_k(\bm{a})$. Let $l=\tl{rk}(\bm{a})$, then there exist $\bm{a}'\in\mathbb{F}_{q^m}^l$ with $\tl{rk}(\bm{a}')=l$ and $Q\in \tl{GL}_n(\mathbb{F}_q)$ such that $\bm{a}=(\bm{a}',\bm{0})Q$. Let $A'\in\mathcal{M}_{k,l}(\mathbb{F}_{q^m})$ be a Moore matrix generated by $\bm{a}'$, then it follows immediately that $A=[A'|0]Q$. If $l>k$, then $\rank(A)=\rank(A')=k$ due to Proposition \ref{proposition5} and therefore the conclusion is proved. Otherwise, there will be $\langle A'\rangle_{q^m}=\mathbb{F}_{q^m}^l$. From this we can deduce that the minimum rank distance of $\mathcal{G}$ will be $1$, which conflicts with the fact that $\tl{rk}(\mathcal{G})=n-k+1\geqslant 2$. Hence $l>k$ and $\rank(A)=k$. This completes the proof.
\end{proof}

By Definition \ref{gabidulin}, a Gabidulin code is uniquely determined by its generating vector. Naturally, it is important to make clear what all these vectors look like and how many generating vectors there exist for a Gabidulin code.
\begin{proposition}\label{proposition1}
Let $\mathcal{G}$ be an $[n,k]$ Gabidulin code over $\mathbb{F}_{q^m}$, with $\bm{g}\in\mathbb{F}_{q^m}^n$ as a generating vector. Let $\bm{g}'\in\mathbb{F}_{q^m}^n$ be a codeword in $\mathcal{G}$, then $\bm{g}'$ forms a generating vector if and only if there exists $\gamma\in\mathbb{F}_{q^m}^*$ such that $\bm{g}'=\gamma\bm{g}$.
\end{proposition}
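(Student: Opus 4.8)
The plan is to prove the two implications separately, with essentially all the work concentrated in the "only if" direction. For the "if" direction I would assume $\bm{g}'=\gamma\bm{g}$ with $\gamma\in\mathbb{F}_{q^m}^*$ and check directly that $\bm{g}'$ generates $\mathcal{G}$. Since multiplication by the nonzero scalar $\gamma$ is an $\mathbb{F}_q$-linear bijection of $\mathbb{F}_{q^m}$, it preserves the $\mathbb{F}_q$-dimension of supports, so $\tl{rk}(\bm{g}')=\tl{rk}(\bm{g})=n$ and $\tl{Mr}_k(\bm{g}')$ is a genuine standard generator matrix of an $[n,k]$ Gabidulin code. Its $s$-th row is $\bm{g}'^{[s]}=\gamma^{[s]}\bm{g}^{[s]}$, a scalar multiple of the $s$-th row of $\tl{Mr}_k(\bm{g})$ and hence a codeword of $\mathcal{G}$. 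Thus $\tl{Mr}_k(\bm{g}')$ is a nonzero Moore matrix all of whose rows lie in $\mathcal{G}$, and Lemma \ref{lemma2} forces it to be a generator matrix of $\mathcal{G}$; that is, $\bm{g}'$ is a generating vector.

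For the "only if" direction I would argue by induction on $k$ (under the standing assumption $k<n$), using the intersection formula of Proposition \ref{proposition2} to peel off one dimension at a time. The base case $k=1$ is immediate, since a $1$-dimensional Gabidulin code is exactly $\{\gamma\bm{g}:\gamma\in\mathbb{F}_{q^m}\}$ and any generating vector is a nonzero codeword. For the inductive step $k\geqslant 2$, suppose $\bm{g}'$ is a second generating vector of $\mathcal{G}$. Because $n>k$ we have $1\leqslant\min\{k-1,n-k\}$, so Proposition \ref{proposition2} applies with $l=1$ to both generating descriptions of $\mathcal{G}$ and gives
\[\tl{Gab}_{n,k-1}(\bm{g}^{[1]})=\mathcal{G}\cap\mathcal{G}^{[1]}=\tl{Gab}_{n,k-1}(\bm{g}'^{[1]}).\]
Hence $\bm{g}^{[1]}$ and $\bm{g}'^{[1]}$ are both generating vectors of the same $[n,k-1]$ Gabidulin code $\mathcal{G}\cap\mathcal{G}^{[1]}$, and each is a codeword of it. By the induction hypothesis there is $\delta\in\mathbb{F}_{q^m}^*$ with $\bm{g}'^{[1]}=\delta\bm{g}^{[1]}$. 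Raising both sides to the $[m-1]$-th Frobenius power, which is the inverse of $[1]$ on $\mathbb{F}_{q^m}$, yields $\bm{g}'=\delta^{[m-1]}\bm{g}$, so the claim holds with $\gamma=\delta^{[m-1]}\in\mathbb{F}_{q^m}^*$.

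The step I expect to be the main obstacle is exactly this "only if" direction. The naive route---writing $\bm{g}'=\sum_{j=0}^{k-1}\lambda_j\bm{g}^{[j]}$ and comparing the highest-order Frobenius term to conclude that only $\lambda_0$ survives---works only while the powers $\bm{g}^{[j]}$ stay linearly independent over $\mathbb{F}_{q^m}$, and this fails once the indices reach $n$; for $k$ close to $n$ the leading-coefficient comparison breaks down entirely. The induction through $\mathcal{G}\cap\mathcal{G}^{[1]}$ is what circumvents this, since lowering the dimension by one only requires $n-k\geqslant 1$, which holds throughout. I therefore expect the delicate points to be verifying that Proposition \ref{proposition2} applies simultaneously to both generating vectors, so that the two $[n,k-1]$ descriptions of $\mathcal{G}\cap\mathcal{G}^{[1]}$ genuinely coincide, and the bookkeeping of inverting the Frobenius map at the end.
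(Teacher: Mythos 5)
Your proof is correct, but its main (``only if'') direction takes a genuinely different route from the paper's. The paper avoids induction entirely: writing $\bm{g}=(\alpha_1,\ldots,\alpha_n)$ and $\bm{g}'=(\alpha'_1,\ldots,\alpha'_n)$, it picks $\gamma$ with $\alpha'_1=\gamma\alpha_1$, observes that $\tl{Mr}_k(\gamma\bm{g})=S\,\tl{Mr}_k(\bm{g})$ for the invertible diagonal matrix $S=\mathrm{diag}(\gamma,\gamma^{[1]},\ldots,\gamma^{[k-1]})$, and then considers the difference $\bm{g}^*=\gamma\bm{g}-\bm{g}'$. Since Moore matrices are closed under addition, $\tl{Mr}_k(\bm{g}^*)=S\,\tl{Mr}_k(\bm{g})-\tl{Mr}_k(\bm{g}')$ has all of its rows in $\mathcal{G}$; if $\bm{g}^*\neq\bm{0}$, Lemma \ref{lemma2} would make $\bm{g}^*$ a generating vector of $\mathcal{G}$ even though its first coordinate vanishes, which forces every codeword of $\mathcal{G}$ to have rank at most $n-1$ and contradicts $\tl{rk}(\bm{g})=n$; hence $\bm{g}^*=\bm{0}$. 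Your induction through $\mathcal{G}\cap\mathcal{G}^{[1]}$ via Proposition \ref{proposition2} reaches the same conclusion by peeling off one Frobenius layer at a time down to the trivial $k=1$ case, and it is sound: the intersection depends only on $\mathcal{G}$ and not on the chosen generating vector, so the two $[n,k-1]$ descriptions coincide; $\bm{g}'^{[1]}$ is indeed a codeword of that intersection; and undoing the Frobenius with the $[m-1]$-th power is legitimate. Both arguments implicitly require $k<n$ (you state it as a standing assumption; the paper inherits it from Lemma \ref{lemma2}), and the statement genuinely fails for $k=n$. What the paper's argument buys is brevity and locality --- a single rank-support contradiction carried out at dimension $k$; what yours buys is structural insight --- it reduces uniqueness of the generating vector to the obvious one-dimensional case using the same Frobenius-intersection machinery that underlies Overbeck-style attacks, and it correctly diagnoses why the naive leading-coefficient comparison of $\bm{g}'=\sum_j\lambda_j\bm{g}^{[j]}$ breaks down once the Frobenius powers involved exceed $n$.
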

\begin{proof}
Assume that $\bm{g}=(\alpha_1,\ldots,\alpha_n)$ and $\bm{g}'=(\alpha'_1,\ldots,\alpha'_n)$, let $G=\tl{Mr}_k(\bm{g})$ and $G'=\tl{Mr}_k(\bm{g}')$. The conclusion is trivial if $\bm{g}=\bm{g}'$. Otherwise, without loss of generality we assume that $\alpha'_1\neq\alpha_1$, then there exists $\gamma\in\mathbb{F}_{q^m}^*\backslash\{1\}$ such that $\alpha'_1=\gamma\alpha_1$. Let 
\[S=
\begin{pmatrix}
\gamma&0&\cdots&0\\
0&\gamma^{[1]}&\cdots&0\\
\vdots&\vdots&&\vdots\\
0&0&\cdots&\gamma^{[k-1]}
\end{pmatrix},
\]
then $SG=\tl{Mr}_k(\gamma\bm{g})$. Let $\bm{g}^*=\gamma\bm{g}-\bm{g}'=(0,\gamma\alpha_2-\alpha'_2,\ldots,\gamma\alpha_n-\alpha'_n)$ and $G^*=\tl{Mr}_k(\bm{g}^*)$, then $G^*=SG-G'$. Apparently all the row vectors of $G^*$ are codewords in $\mathcal{G}$. If $\bm{g}^*\neq \bm{0}$, then $G^*$ forms a generator matrix of $\mathcal{G}$ of standard form due to Lemma \ref{lemma2}. Together with $\tl{rk}(\bm{g}^*)\leqslant n-1$, easily we can deduce that $\tl{rk}(\bm{c})\leqslant n-1$ for any $\bm{c}\in\mathcal{G}$, which clearly contradicts the fact that $\tl{rk}(\bm{g})=n$. Therefore there must be $\bm{g}^*=\bm{0}$, or equivalently $\bm{g}'=\gamma\bm{g}$. The opposite is obvious from a straightforward verification.
\end{proof}

The following corollary is drawn immediately from Proposition \ref{proposition1}.
\begin{corollary}\label{corollary1}
An $[n,k]$ Gabidulin code over $\mathbb{F}_{q^m}$ admits $q^m-1$ generator matrices of standard form, or equivalently $q^m-1$ generating vectors.
\end{corollary}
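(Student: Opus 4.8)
The plan is to derive Corollary \ref{corollary1} as a direct counting consequence of Proposition \ref{proposition1}. Proposition \ref{proposition1} already establishes that, once a generating vector $\bm{g}$ of an $[n,k]$ Gabidulin code $\mathcal{G}$ over $\mathbb{F}_{q^m}$ is fixed, every generating vector of $\mathcal{G}$ is precisely of the form $\gamma\bm{g}$ with $\gamma\in\mathbb{F}_{q^m}^*$, and conversely every such scalar multiple is again a generating vector. So the set of generating vectors is exactly $\{\gamma\bm{g}:\gamma\in\mathbb{F}_{q^m}^*\}$, and I would establish the count by exhibiting a bijection between this set and $\mathbb{F}_{q^m}^*$.

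First I would fix one generating vector $\bm{g}$, which exists by the definition of a Gabidulin code (Definition \ref{gabidulin}). Then I would consider the map $\mathbb{F}_{q^m}^*\to\{\text{generating vectors of }\mathcal{G}\}$ sending $\gamma\mapsto\gamma\bm{g}$. By Proposition \ref{proposition1} this map is surjective onto the set of generating vectors. For injectivity, I would observe that $\gamma\bm{g}=\gamma'\bm{g}$ forces $(\gamma-\gamma')\bm{g}=\bm{0}$; since $\bm{g}$ is a generating vector it has $\tl{rk}(\bm{g})=n\geqslant 1$ and hence $\bm{g}\neq\bm{0}$, so $\gamma=\gamma'$. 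This gives a bijection, and therefore the number of generating vectors equals $|\mathbb{F}_{q^m}^*|=q^m-1$.

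Finally I would translate this count of generating vectors into the count of standard generator matrices. Since the standard generator matrix is $\tl{Mr}_k(\cdot)$ applied to the generating vector, and two distinct vectors $\bm{a},\bm{a}'$ yield distinct matrices $\tl{Mr}_k(\bm{a}),\tl{Mr}_k(\bm{a}')$ (they already differ in the first row), the correspondence between generating vectors and standard generator matrices is one-to-one. Hence there are exactly $q^m-1$ standard generator matrices as well, establishing the equivalence asserted in the statement.

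There is no serious obstacle here, as the genuine content lies entirely in Proposition \ref{proposition1}; the corollary is a routine counting step. The only point requiring a moment of care is injectivity of $\gamma\mapsto\gamma\bm{g}$, which relies on $\bm{g}\neq\bm{0}$ — and this is immediate from $\tl{rk}(\bm{g})=n$ built into the definition of a generating vector. I would keep the argument short and phrase it as an explicit bijection to make the cardinality conclusion transparent.
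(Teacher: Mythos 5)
Your proposal is correct and matches the paper's approach: the paper simply states that the corollary follows immediately from Proposition \ref{proposition1}, and your explicit bijection $\gamma\mapsto\gamma\bm{g}$ (with injectivity from $\bm{g}\neq\bm{0}$ and the one-to-one correspondence between generating vectors and standard generator matrices via $\tl{Mr}_k$) is exactly the counting argument left implicit there.
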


\subsection{Recovering the secret $T$}
\noindent In this section, we mainly describe an efficient algorithm for recovering the secret $T$. In summary, the technique we adopt here is to convert the problem of recovering $T$ into solving a multivariate linear system, which clearly costs polynomial time. Before doing this, we first introduce the so-called subfield expanding transform.

\textbf{Subfield Expanding Transform.} For $\beta_1,\ldots,\beta_n\in\mathbb{F}_{q^m}$, we construct an equation as 
\begin{align}\label{equation5}
\sum_{j=1}^nx_j\beta_j=0,
\end{align} 
where $x_j$'s are underdetermined variables in $\mathbb{F}_q$. Let $\bm{a}$ be a basis vector of $\mathbb{F}_{q^m}$ over $\mathbb{F}_q$. For each $1\leqslant j\leqslant n$, there exists $\bm{b}_j\in\mathbb{F}_q^m$ such that $\beta_j=\bm{b}_j\bm{a}^T$. It follows that $\sum_{j=1}^nx_j\beta_j=\sum_{j=1}^nx_j(\bm{b}_j\bm{a}^T)=(\sum_{j=1}^nx_j\bm{b}_j)\bm{a}^T$, and moreover, (\ref{equation5}) holds if and only if 
\begin{align}\label{equation6}
\sum_{j=1}^nx_j\bm{b}_j=\bm{0}.
\end{align}
Obviously, the linear systems (\ref{equation5}) and (\ref{equation6}) share the same solution space. A transform that derives (\ref{equation6}) from (\ref{equation5}) is called a subfield expanding transform.

In the Lau-Tan cryptosystem, let $H\in\mathcal{M}_{n-k,n}(\mathbb{F}_{q^m})$ be a parity-check matrix of $\mathcal{G}$ of standard form. Let $M\in\mathcal{M}_{n-k,m}(\mathbb{F}_{q^m})$ be a Moore matrix generated by a basis vector of $\mathbb{F}_{q^m}$ over $\mathbb{F}_q$, then there exists an underdetermined matrix $X\in\mathcal{M}_{m,n}(\mathbb{F}_q)$ such that $H=MX$. Let $T^*\in \tl{GL}_n(\mathbb{F}_q)$ be another underdetermined matrix such that $G_{pub}-\tl{Cir}_k(\bm{u})T^*=G_{pub}-UT^*$ forms a generator matrix of $\mathcal{G}$, or equivalently
\begin{align}\label{equation1}
(G_{pub}-UT^*)(MX)^T=G_{pub}X^TM^T-UT^*X^TM^T=0.
\end{align}
We therefore obtain a system of $k(n-k)$ multivariate quadratic equations, with $n(m+n)$ variables in $\mathbb{F}_q$. This  system admits at least $q^m$ solutions. Specifically, we introduce the following proposition.

\begin{proposition}
The linear system \tl{(\ref{equation1})} has at least $q^m$ solutions. 
\end{proposition}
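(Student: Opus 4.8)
The plan is to exhibit $q^m$ explicit solutions of \eqref{equation1} by anchoring at the genuine secret key and then letting the one-parameter family of generating vectors of the dual code (Corollary \ref{corollary1}) do the counting. First I would specialise $T^\ast=T$. Since $G_{pub}=SG+UT$ and $S\in\tl{GL}_k(\mathbb{F}_{q^m})$, this makes $G_{pub}-UT^\ast=SG$ a generator matrix of $\mathcal{G}$, and \eqref{equation1} collapses to $(SG)(MX)^T=0$. Reading this rowwise, it says that every row of the matrix $MX$ is orthogonal to all of $\mathcal{G}$, i.e. lies in the dual code $\mathcal{G}^\perp$, which is an $[n,n-k]$ Gabidulin code.

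Next I would determine exactly which $X\in\mathcal{M}_{m,n}(\mathbb{F}_q)$ satisfy this. Writing $\bm{a}$ for the basis vector generating the fixed Moore matrix $M$, Proposition \ref{proposition5}(2) shows that $MX=\tl{Mr}_{n-k}(\bm{a}X)$ is again a Moore matrix, with generating vector $\bm{a}X$. Applying Lemma \ref{lemma2} to the $[n,n-k]$ code $\mathcal{G}^\perp$ (so that its row count $n-k$ matches the dimension), a nonzero Moore matrix all of whose rows lie in $\mathcal{G}^\perp$ must be a generator matrix of $\mathcal{G}^\perp$; Proposition \ref{proposition1} applied to $\mathcal{G}^\perp$ then forces its generating vector to be $\gamma\bm{h}$ for some $\gamma\in\mathbb{F}_{q^m}^\ast$, where $\bm{h}$ is a fixed generating vector of $\mathcal{G}^\perp$. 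Folding in the degenerate case $MX=0$ (which trivially solves the equation) as $\gamma=0$, the condition on $X$ becomes exactly $\bm{a}X\in\{\gamma\bm{h}:\gamma\in\mathbb{F}_{q^m}\}$, a set of $q^m$ vectors.

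Finally I would convert this into a count. Because $\bm{a}$ is a basis vector, the map $\bm{v}\mapsto\bm{a}\bm{v}^T$ is a bijection $\mathbb{F}_q^m\to\mathbb{F}_{q^m}$, so applying it to each column shows that $X\mapsto\bm{a}X$ is a bijection $\mathcal{M}_{m,n}(\mathbb{F}_q)\to\mathbb{F}_{q^m}^n$. Hence each of the $q^m$ admissible vectors $\gamma\bm{h}$ is the image of a unique $X$, giving $q^m$ distinct matrices $X$; every pair $(X,T)$ then solves \eqref{equation1}, and these pairs are distinct since they differ in $X$. This produces at least $q^m$ solutions (note that the honest parity-check choice $MX=H=\tl{Mr}_{n-k}(\bm{h})$ is simply the case $\gamma=1$).

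The step I expect to be the main obstacle is the reduction in the middle paragraph: one must ensure that ``all rows of $MX$ lie in $\mathcal{G}^\perp$'' really pins $\bm{a}X$ down to the one-dimensional space of generating vectors, rather than merely somewhere inside $\mathcal{G}^\perp$. This is precisely where the Moore structure is indispensable---Lemma \ref{lemma2} upgrades ``rows in $\mathcal{G}^\perp$'' to ``generator matrix of $\mathcal{G}^\perp$'', and Proposition \ref{proposition1} then collapses the generating vector to a scalar multiple of $\bm{h}$---while the zero case must be handled separately so that the final tally is exactly $q^m$ and not $q^m-1$.
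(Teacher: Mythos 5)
Your proof is correct and takes essentially the same route as the paper's: specialize $T^*=T$, observe that the rows of the Moore matrix $MX$ must then lie in the dual Gabidulin code $\mathcal{G}^\perp$, and invoke Lemma \ref{lemma2} together with Proposition \ref{proposition1} (equivalently Corollary \ref{corollary1}) to count the admissible $X$. Your write-up is in fact slightly more explicit than the paper's terse argument, since you spell out the bijection $X\mapsto\bm{a}X$ and fold in the degenerate solution $X=0$ to reach the count of exactly $q^m$, both of which the paper leaves implicit.
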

\begin{proof}
If $T^*=T$, then we can deduce from (\ref{equation1}) that
\[
(G_{pub}-UT^*)(MX)^T=G_{pub}X^TM^T-UT^*X^TM^T=SGX^TM^T=SG(MX)^T=0.
\]
Note that $SG\in\mathcal{M}_{k,n}(\mathbb{F}_{q^m})$ forms a generator matrix of $\mathcal{G}$. By $SG(MX)^T=0$, we conclude that all the row vectors of $MX$ are contained in the dual code of $\mathcal{G}$, which is an $[n,n-k]$ Gabidulin code. On the other hand, it is easy to see that $MX$ is an $(n-k)\times n$ Moore matrix. By Lemma \ref{lemma2}, $MX$ forms a standard generator matrix of $\mathcal{G}^\perp$ for a nonzero $X$. Then the conclusion is immediately proved from Corollary \ref{corollary1}. Furthermore, we have that $X$ is an $m\times n$ matrix of full rank.
\end{proof}

Note that solving a multivariate quadratic system generally requires exponential time. Instead of solving the system (\ref{equation1}) directly, the technique we exploit here is to consider each entry of $T^*X^T$ as a new variable in $\mathbb{F}_{q}$ and set $Y=X{T^*}^T$. In other words, we rewrite (\ref{equation1}) into the following matrix equation
\begin{align}\label{equation2}
G_{pub}X^TM^T-UY^TM^T=0.
\end{align}
This enables us to obtain a linear system of $k(n-k)$ equations, with coefficients in $\mathbb{F}_{q^m}$ and $2mn$ variables in $\mathbb{F}_q$. To solve the system (\ref{equation2}), we usually convert this problem into an instance over the base field $\mathbb{F}_q$. Applying the subfield expanding transform to (\ref{equation2}) leads to a linear system of $mk(n-k)$ equations over $\mathbb{F}_q$, with $2mn$ variables to be determined. For a cryptographic use, generally we have $mk(n-k)\geqslant 2mn$.

\begin{remark}\label{remark1}
For each solution $(X,T^*)$ of (\ref{equation1}), one can easily obtain a solution of (\ref{equation2}) by computing $Y=X{T^*}^T$, which implies that there are also at least $q^m$ solutions for (\ref{equation2}). Conversely, if (\ref{equation2}) has exactly $q^m$ solutions, then these solutions must correspond to solutions of (\ref{equation1}) where $T^*=T$. In this situation,  for any nonzero solution $(X,Y)$ of (\ref{equation2}), solving the matrix equation $Y=X{T^*}^T$ will lead to the secret $T=T^*$. 
\end{remark}

As for whether or not the system (\ref{equation2}) has other types of solutions, we make an \textbf{Assumption} that the answer is negative. According to our experimental results on Magma, this assumption holds with high probability. To make it easier,  a simplified version of this problem is considered. Let $G$ be an arbitrary generator matrix of an $[n,k]$ Gabidulin code and $\bm{u}\in\mathbb{F}_{q^m}^n$ such that $\tl{rk}(\bm{u})=n$. We then construct a matrix equation as 
\[
GX^TM^T+\tl{Cir}_k(\bm{u})Y^TM^T=0,
\] 
where $M\in\mathcal{M}_{n-k,m}(\mathbb{F}_{q^m})$ is a Moore matrix generated by a basis vector of $\mathbb{F}_{q^m}$ over $\mathbb{F}_q$ and $X,Y\in\mathcal{M}_{m,n}(\mathbb{F}_q)$ are two underdetermined matrices. By applying the subfield expanding transform to this system above, we obtain a new system over $\mathbb{F}_q$. By Remark \ref{remark1}, if this newly obtained system admits a solution space of dimension $m$, then there must be $Y=0$. Eventually we ran 1000 random tests for $q=2, m=25, n=23$ and $k=10$. It turns out that the assumption holds in all of these random instances.

\begin{algorithm}[H]
  \small
  \caption{: $T$-recovering algorithm}
  \label{algorithm1}
  \hspace*{0.02in} {\bf Input:}
    The public key $(G_{pub},\bm{u})$\\
  \hspace*{0.02in} {\bf Output:}
    The secret $T$
  \begin{algorithmic}[1]
  \State Let $\bm{a}$ be a basis vector of $\mathbb{F}_{q^m}$ over $\mathbb{F}_q$ and set $M=\tl{Mr}_{n-k}(\bm{a})$
  \State Let $X,Y\in\mathcal{M}_{m,n}(\mathbb{F}_q)$ be two underdetermined matrices and construct a linear system 
    \begin{align}\label{equation3}
      G_{pub}X^TM^T-\tl{Cir}_k(\bm{u})Y^TM^T=0
    \end{align}
  \State Applying the subfield expanding transform to (\ref{equation3}) to obtain a linear system over $\mathbb{F}_q$
  \State Solve this system for $(X,Y)$
  \State For any nonzero $(X,Y)$, solve the matrix equation $Y=X{T^*}^T$ for $T^*$
  \State \Return $T=T^*$
  \end{algorithmic}
\end{algorithm}

\subsection{Finding an equivalent $(S',G')$}
\noindent With the knowledge of a generating vector, we can deduce many characteristics of a Gabidulin code, such as an efficient decoding algorithm. A natural question is how to derive the generating vector of a Gabidulin code from an arbitrary generator matrix. In \cite{horlemann2018extension} the authors presented an iterative method of computing the generating vector. Here in this paper we present a different approach to do this, as described in the following.

\textbf{An approach to compute the generating vector}. For an $[n,k]$ Gabidulin code $\mathcal{G}$ over $\mathbb{F}_{q^m}$, let $G\in\mathcal{M}_{k,n}(\mathbb{F}_{q^m})$ be an arbitrary generator matrix of $\mathcal{G}$. We first compute a parity-check matrix of $\mathcal{G}$ from $G$, say $H$. Let $M\in\mathcal{M}_{k,m}(\mathbb{F}_{q^m})$ be a Moore matrix generated by a basis vector of $\mathbb{F}_{q^m}$ over $\mathbb{F}_q$, then there exists an underdetermined matrix $X\in\mathcal{M}_{m,n}(\mathbb{F}_q)$ such that $MX$ forms a  standard generator matrix of $\mathcal{G}$. By setting $(MX)H^T=0$ we obtain a linear system of $k(n-k)$ equations, with coefficients in $\mathbb{F}_{q^m}$ and $mn$ variables in $\mathbb{F}_q$. Applying the subfield expanding transform to this system leads to a new linear system over the base field $\mathbb{F}_q$, with $mk(n-k)$ equations and $mn$ variables. For a cryptographic use, generally we have $mk(n-k)\geqslant mn$. By Corollary \ref{corollary1}, this newly obtained system admits $q^m-1$ nonzero solutions. And for any nonzero solution, say $X$, the first row of $MX$ will be a generating vector of $\mathcal{G}$.
\begin{algorithm}[H]
  \small
  \caption{: Finding an equivalent $(S',G')$}
  \label{algorithm2}
  \hspace*{0.02in} {\bf Input:}
    $(G_{pub},\bm{u},T)$\\
  \hspace*{0.02in} {\bf Output:}
    $(S',G')$ such that $G'$ forms a standard generator matrix of $\mathcal{G}$ and $S'G'=SG$
  \begin{algorithmic}[1]
  \State Let $\bm{a}$ be a basis vector of $\mathbb{F}_{q^m}$ over $\mathbb{F}_q$ and construct a Moore matrix $M=\tl{Mr}_k(\bm{a})$
  \State Compute $SG=G_{pub}-\tl{Cir}_k(\bm{u})T$ and $\mathcal{G}=\langle SG\rangle_{q^m}$
  \State Let $H\in\mathcal{M}_{n-k,n}(\mathbb{F}_{q^m})$ be a parity-check matrix of $\mathcal{G}$
  \State Let $X\in\mathcal{M}_{m,n}(\mathbb{F}_q)$ be an underdetermined matrix and construct a linear system as 
    \begin{align}\label{equation4}
      (MX)H^T=0
    \end{align}
  \State Applying the subfield expanding transform to (\ref{equation4}) to obtain a new system over $\mathbb{F}_q$
  \State Solve this new system for a nonzero $X$ and compute $G'=MX$
  \State Compute $S'\in \tl{GL}_k(\mathbb{F}_{q^m})$ such that $S'G'=SG$
  \State \Return $(S',G')$
  \end{algorithmic}
\end{algorithm}

\subsection{Complexity of the attack}
\noindent Our attack consists of two phases: firstly, we manage to recover the secret $T$ from the published information, as described in Algorithm \ref{algorithm1}; secondly, with the knowledge of $T$ and the public key, we compute a standard generator matrix $G'$ of the secret Gabidulin code and an invertible matrix $S'$, as described in Algorithm \ref{algorithm2}. Hence the complexity analysis is done in the following two aspects.

\textbf{Complexity of Algorithm} \ref{algorithm1}. In Step 1 we construct a Moore matrix $M\in\mathcal{M}_{n-k,m}(\mathbb{F}_{q^m})$ whose first row vector forms a basis of $\mathbb{F}_{q^m}$ over $\mathbb{F}_q$. To avoid executing the Frobenius operation, here we choose $\bm{a}$ to be a normal basis vector, then we set $M=\tl{Cir}_{n-k}(\bm{a})$. In Step 2 we construct a multivariate linear system by performing matrix multiplication, requiring $\mathcal{O}(mn^3)$ operations in $\mathbb{F}_{q^m}$. The subfield expanding transform performed to (\ref{equation3}) requires $\mathcal{O}(m^3n^3)$ operations in $\mathbb{F}_{q^m}$. Step 4 requires $\mathcal{O}(m^3n^3)$ operations to solve the linear system over $\mathbb{F}_q$ and Step 5 requires $\mathcal{O}(n^3)$ operations in $\mathbb{F}_q$. The total complexity of Algorithm \ref{algorithm1} consists of $\mathcal{O}(m^3n^3+mn^3)$ operations in $\mathbb{F}_{q^m}$ and $\mathcal{O}(m^3n^3+n^3)$ operations in $\mathbb{F}_q$.

\textbf{Complexity of Algorithm} \ref{algorithm2}. In Step 1 we still choose a normal basis vector to construct $M$. To compute $SG$, we perform matrix addition and multiplication with $\mathcal{O}(n^3)$ operations in $\mathbb{F}_{q^m}$. Step 3 computes a parity-check $H$ of $\mathcal{G}$ from $SG$, requiring $\mathcal{O}(n^3)$ operations in $\mathbb{F}_{q^m}$. Then we construct a linear system in Step 4, which costs $\mathcal{O}(mn^3)$ operations in $\mathbb{F}_{q^m}$. In Step 5 we apply the subfield expanding transform to (\ref{equation4}) to obtain a new system over $\mathbb{F}_q$, requiring $\mathcal{O}(m^3n^3)$ operations in $\mathbb{F}_{q^m}$. Solving this new system in Step 6 costs $\mathcal{O}(m^3n^3)$ operations in $\mathbb{F}_q$, and compute $G'=MX$ with $\mathcal{O}(mn^2)$ operations in $\mathbb{F}_{q^m}$. In Step 7, we shall compute $S'$ from $S'G'$ with $\mathcal{O}(n^3)$ operations. The total complexity of Algorithm \ref{algorithm2} consists of $\mathcal{O}(m^3n^3+mn^3+n^3)$ operations in $\mathbb{F}_{q^m}$ and $\mathcal{O}(m^3n^3)$ operations in $\mathbb{F}_q$.

Finally, the total complexity of the attack is $\mathcal{O}(m^3n^3+mn^3+n^3)$ in $\mathbb{F}_{q^m}$ plus $\mathcal{O}(m^3n^3+n^3)$ in $\mathbb{F}_q$.

\subsection{Implementation}
\noindent This attack has been implemented on Magma and permits to recover the secret $T$. We tested this attack on a personal computer and succeeded for parameters as illustrated in Table \ref{table1}. For each parameter set, the attack has been run 100 times and the last column gives the average timing (in seconds). Our implementation is just a proof of the feasibility of this attack and does not consider the proposed parameters in \cite{lau2018new} due to limited resources. 

\begin{table}[H]  
\parbox{.45\textwidth}{\caption{These experiments were performed using Magma V2.11-1 on an 11th Gen Intel(R) Core(TM) i7-11700 @ 2.5GHz processor with 16 GB of memory.}\label{table1}}  
\hspace{\fill}
\parbox{.55\textwidth}{
\begin{center}
 \setlength{\tabcolsep}{6mm}{
\begin{tabular}{lllr|r}
\hline\noalign{\smallskip}
$q$ & $m$ & $n$ & $\makecell*[c]{k}$ & $\makecell*[c]{t}$ \\ 
\noalign{\smallskip}\hline\noalign{\smallskip}
2 & 22 & 18 & 9 & 8.6 \\  
2 & 28 & 22 & 9 & 40.7 \\ 
2 & 35 & 26 & 12 & 173.2 \\
\noalign{\smallskip}\hline
\end{tabular}}
\end{center}}
\end{table}

\section{Conclusion}\label{section5}
\noindent Our attack revealed the structural weakness of the Lau-Tan cryptosystem. Although the first part of the public key hid the structure of Gabidulin codes perfectly, the second part did reveal important information that can be used to design a key recovery attack. Specifically, we convert the problem of recovering the private key into solving a multivariate linear system over the base field $\mathbb{F}_q$. Even though this system admits a solution space of dimension $m$, we are able to recover the secret $T$ and then an equivalent $(S',G')$ from any nonzero solution. Extensive experiments have been performed and the results show that our attack accords with the theoretical expectations. In summary, we found a polynomial-time key recovery attack on the Lau-Tan cryptosystem under a reasonable assumption. 

\begin{acknowledgements}
This research is supported by the National Key Research and Development Program of China (Grant No. 2018YFA0704703), the National Natural Science Foundation of China (Grant No. 61971243), the Natural Science Foundation of Tianjin (20JCZDJC00610), and the Fundamental Research Funds for the Central Universities of China (Nankai University).
\end{acknowledgements}

\end{document}